
\documentclass[11pt]{article}
\usepackage{amssymb,amsthm,amsmath,float}
 \usepackage{mathtools}   
\usepackage[total={6.5in,8.75in}, top=1.2in, left=0.9in, includefoot]{geometry}
\usepackage{graphicx}

\usepackage{color}  
\usepackage{appendix}
\usepackage{algorithm}
\usepackage{algpseudocode}
\DeclareGraphicsRule{.tif}{png}{.png}{`convert #1 `dirname #1`/`basename #1 .tif`.png}

\usepackage{url}

\newcommand{\Eq}[1]{(\ref{eq:#1})}

\newcommand{\Lem}[1]{Lem.~\ref{lem:#1}}

\newcommand{\Sec}[1]{\S \ref{sec:#1}}
\newcommand{\Fig}[1]{Fig.~\ref{fig:#1}}

\newcommand{\App}[1]{Appendix~\ref{app:#1}}
\newcommand{\Alg}[1]{Algorithm~\ref{alg:#1}}

\newcommand{\InsertFig}[4]
{\begin{figure}[h!t]
       \centerline{
         \includegraphics[width=#4]{./figures/#1}
       }
       \caption{{\footnotesize  #2}
       \label{fig:#3}}
\end{figure}}

\newcommand{\InsertFigTwo}[5] {
\begin{figure}[h!t]
       \centerline{
         \includegraphics[width=#5]{./figures/#1}
         \hskip 0.5in
         \includegraphics[width=#5]{./figures/#2}
       }
       \caption{{\footnotesize  #3}
       \label{fig:#4}}
\end{figure}}


\newcommand{\bN}{{\mathbb{ N}}}
\newcommand{\bQ}{{\mathbb{ Q}}}
\newcommand{\bR}{{\mathbb{ R}}}

\newcommand{\bT}{{\mathbb{ T}}}
\newcommand{\bZ}{{\mathbb{ Z}}}

\newcommand{\cO}{{\cal O}}




\newcommand{\WB} {\mathit{WB}}
\newcommand{\TSS} {\mathit{TSS}}



\newtheorem{thm}{Theorem}
\newtheorem{lem}[thm]{Lemma}


\newcommand{\beq}[1]{\begin{equation}\label{eq:#1}}
\newcommand{\eeq}{\end{equation}}

\newenvironment{se}[1]{\equation\label{eq:#1}\aligned}{\endaligned\endequation}
\newcommand{\bsplit}[1]{\begin{se}{#1}}
\newcommand{\esplit}{\end{se}}


\newenvironment{example}[1][]
  {
	\setlength \leftmargini {1.0em}		
	\setlength \topsep {0.5em}			
	\begin{quote}
	{\it Example#1} }
	{\end{quote}
  }
\newcommand{\bexam}[1][:]{\begin{example}[#1]}
\newcommand{\eexam}{\end{example}}

\title{Birkhoff Averages and Rotational Invariant Circles for Area-Preserving Maps}
\author{E. Sander and J.D.~Meiss \thanks 
      {
        JDM was supported in part by NSF grant DMS-181248. ES and JDM acknowledge
        support from NSF grant DMS-140140 while they were at residence at the 
        Mathematical Sciences Research Institute in Berkeley, CA, during the Fall 2018 semester.
        Useful conversations with Xinzhi Rao are gratefully acknowledged. 
      }
    \\
 \begin{tabular}{cc}
	Department of Mathematical Sciences		&	Department of Applied Mathematics\\
    George Mason University					&	University of Colorado \\
	Fairfax, VA 22030, USA					&	Boulder, CO 80309-0526 \\
	esander@gmu.edu							&	James.Meiss@colorado.edu\\ 
\end{tabular}
}
\date{\today}
\begin{document}
\maketitle

\begin{abstract}
\vspace*{1ex}
\noindent

Rotational invariant circles of area-preserving maps are an important and well-studied example 
of KAM tori. John Greene conjectured that the locally most robust rotational circles have rotation numbers that are noble, i.e., have continued fractions with a tail of ones, and that, of these
circles, the most robust has golden mean rotation number. The accurate numerical confirmation
of these conjectures relies on the map having a time reversal symmetry, and these
methods cannot be applied to more general maps. 
In this paper, we develop a method based on a weighted Birkhoff average
for identifying chaotic orbits, island chains, and rotational invariant circles
that do not rely on these symmetries. We use Chirikov's standard map as
our test case, and also demonstrate that our methods apply to three other, well-studied cases.

\end{abstract}

\section{Introduction}\label{sec:Intro}
The dynamics of an integrable Hamiltonian or volume-preserving system consists of periodic and quasi-periodic motion on invariant tori. When such a system is smoothly perturbed, Kolmogorov-Arnold-Moser (KAM) theory \cite{delaLlave01} implies that some of these tori persist and some are replaced by
isolated periodic orbits, islands, or chaotic regions. On each KAM torus, the dynamics is conjugate to a rigid rotation with some fixed frequency vector. Typically, as the perturbation grows the proportion of chaotic orbits increases and more of the tori are destroyed. Invariant tori can be found numerically by taking limits of periodic orbits \cite{Greene79} and by iterative methods based on the conjugacy to rotation \cite{Haro06}. In these methods, one fixes a frequency vector and attempts to find invariant sets on which the dynamics has this frequency. In this paper we explore an alternative technique, based on windowed Birkhoff averages \cite{Das16a}, to distinguish between chaotic, resonant, and quasiperiodic dynamics. Since we do not fix the rotation vector in advance, this method permits one to accurately compute the rotation vector for each initial condition that lies on a regular orbit. As such the method is analogous to Laskar's frequency analysis \cite{Laskar92, Bartolini96}, which uses a windowed Fourier transform to compute rotation numbers.

As an illustrative example, we will primarily study Chirikov's standard map \cite{Chirikov79a}, though
in the last section we will consider generalizations of this map. 
Two-dimensional, area-preserving maps are simplest, nontrivial case of Hamiltonian dynamics (for a review, see \cite{Meiss92}).
Letting $f: M \to M$, where $M =\bT \times \bR$, the cylinder, the standard map can be written as $(x_{t+1},y_{t+1}) = f(x_t,y_t)$ with
\bsplit{StdMap}
	x_{t+1}&= x_t + \Omega(y_{t+1})  \,\mod 1, \\         
	y_{t+1}&= y_t + F(x_t) .
\esplit
For Chirikov's case, the ``frequency map" and ``force" are given by
\[
	\Omega(y) = y, \quad F(x) = -\frac{k}{2 \pi} \sin(2\pi x) ,
\] 
respectively. When the parameter $k = 0$, the action $y$ is constant, and every orbit lies on a rotational invariant circle with rotation number $\omega = \Omega(y)$. When $\omega$ is irrational the orbit is dense on the circle, and the dynamics is conjugate to the quasiperiodic, rigid rotation
\beq{RigidRot}
	\theta \to \theta + \omega \,\mod 1
\eeq
for $\theta \in \bT$, under the trivial conjugacy $(x,y) = C(\theta) = (\theta,\omega)$.

As $k$ increases, some of these rotational  invariant circles persist, as predicted by KAM theory, 
but those with rational or ``near" rational rotation numbers are destroyed. On each KAM circle, 
the dynamics is still conjugate to  \Eq{RigidRot}, for some irrational $\omega$, under a smooth map $C: \bT \to \bT \times \bR$. 
As an exampel, \Fig{ChaosComboStd} depicts the dynamics for the Chirikov map for $k=0.7$.
In the top row, we distinguish between non-chaotic and chaotic dynamics, and in the bottom row 
we distinguish  between two types of non-chaotic behavior, namely island chains and rotational invariant circles.
The methods for doing this will be discussed in \Sec{Chaos}-\ref{sec:IslandChains}. 

An orbit $\{(x_t,y_t): t \in \bZ\}$ has a rotation number $\omega$ if the limit
\beq{RotNum}
	\omega = \lim_{T \to \infty} \frac{1}{T} \sum_{t=0}^{T-1} \Omega(y_t)
\eeq
exists. Of course, if $k = 0$, $\omega$ is simply the value of $\Omega$ on the conserved action.
If an orbit is periodic, say $(x_n,y_n) = (x_0,y_0) + (m,0)$ for some integers $m,n$, then $\omega = \tfrac{m}{n}$ is rational.
Indeed, \Eq{StdMap} implies that if we lift $x$ to $\bR$, then
\[
	x_T - x_0 = \sum_{t=1}^{T} \Omega(y_{t})
\]
so for the periodic case $\omega = (x_n-x_0)/n$. Note that $\omega$, as 
a rotation number, is measured with respect to rotation in $x$. For an invariant circle 
within an island chain, the effect of the rotation of the orbit about the island center will average out, 
and $\omega$ will equal $m/n$, the value for 
the periodic orbit it encloses. This can be seen in the  
lower left portion of \Fig{ChaosComboStd}, where each elliptic island has a single
solid color due to having the same value of $\omega$.  In particular, the rotational 
invariant circles are the only non-chaotic orbits with the property that $\omega$ is 
irrational. In \Sec{IslandChains}, we develop a numerical method to determine whether a floating point 
number is (with high probability) rational 
or irrational. With this method, we are able to use the rotation number computed with the 
weighted Birkhoff average to distinguish between rotational and non-rotational 
invariant circles.

The invariant circles that persist by KAM theory have Diophantine rotation numbers, i.e.,
there is a $\tau \ge 1$ and a $c > 0$ such that
\beq{Diophantine}
	|n \omega -m | > \frac{c}{|n|^\tau}, \quad \forall n \in \bN, \, m \in \bZ .
\eeq
Such rotation numbers are hard to approximate by rationals (see \Sec{IslandChains}).
An invariant circle is said to be locally robust if it has a neighborhood in $M$ in which it is the last invariant circle; i.e., it exists for $0 \le k \le k_{cr}(\omega)$ and $k_{cr}$ is a local maximum.
It is known from careful numerical studies that invariant circles with ``noble" rotation numbers (their continued fractions have an infinite tail of ones) are robust \cite{Greene79, MacKay93b}. Since these continued fractions are asymptotically periodic, these rotation numbers are quadratic irrationals and satisfy \Eq{Diophantine} with $\tau = 1$. John Greene discovered that the last rotational circle of \Eq{StdMap} has rotation number given by the golden mean $\gamma$,\footnote
{Or any integer shift of this value by a discrete symmetry of \Eq{StdMap}.}
and that it is destroyed at $k = k_{cr}(\phi) \approx 0.971635$ \cite{Greene79}. When $k > k_{cr}$ no rotational invariant circles are observed.

\InsertFig{ChaosComboStd}{
The dynamics of the standard map for $k=0.7$. Using the weighted Birkhoff method
we are able to distinguish chaotic orbits (upper right), islands (lower left), and 
rotational circles (lower right). 
The rotation number of each nonchaotic orbit is color-coded (color bar at right).
The computations were performed for an evenly spaced grid of $1000^2$ points in 
$[0,1]^2$  with $T=10^4$, using the distinguishing criteria in \Eq{DistCrit}. 
}{ChaosComboStd}{5.0in}

The average \Eq{RotNum} need not exist for orbits that are neither periodic nor quasiperiodic. For example if an orbit is heteroclinic between two periodic orbits with different rotation numbers, the forward and backward time averages of $y$ will be different. Moreover, when $k$ is large enough, $y$ can be unbounded,\footnote
{Often one thinks of $y$ as diffusing in this case, but it can also grow linearly in time due to ``accelerator modes" \cite{Chirikov79a}).}
and the limit \Eq{RotNum} need not even converge. However, if an orbit ergodically covers a bounded region, then Birkhoff's ergodic theorem implies that the time average of $\Omega$ does exist.

More generally a finite-time Birkhoff average on a orbit of a map $f$ beginning at a point $z \in M$  for any function  $h: M \to \bR$ is given by
\beq{Birkhoff}
	B_T(h)(z) = \frac{1}{T} \sum_{t=0}^{T-1} h \circ f^t(z) .
\eeq
This average need not converge rapidly. Even if the orbit lies on a smooth
invariant circle with irrational rotation number, the convergence rate of \Eq{Birkhoff}
is $\cO(T^{-1})$, due to edge effects at the two ends of the finite orbit segment.
By contrast, for the chaotic case, the convergence rate of \Eq{Birkhoff} is observed
to be $\cO(T^{-1/2})$, in essence as implied by the central limit theorem~\cite{Levnajic10}. 

We can significantly improve the convergence of a Birkhoff sum on a
quasiperiodic set by using the method of weighted Birkhoff averages
developed in~\cite{Das16a, Das16b, Das17}, see \Sec{WB}. If the map $f$, the function $h$, and the quasiperiodic set are $C^\infty$, and the rotation number is Diophantine, this method is superconvergent, meaning that the error
decreases faster than any power of $T$~\cite{Das17,Das18}. However, the weighted Birkhoff method does 
not speed up the convergence rate on chaotic sets since these lack
smoothness. Therefore weighted Birkhoff averages have two distinct uses: 
(a) to distinguish chaotic from regular dynamics, and (b) to give a high precision 
computation of the rotation number. 
This method has also been used to find a high precision 
computation of the Fourier series expansion of the conjugacy map for the invariant circle~\cite{Das17}, but 
we do not make use of this in the current paper.

A finite-time computation of the rotation number \cite{Szezech13} has been used to define 
coherent structures by considering ridges in a finite time sum \Eq{RotNum}.
This method also can distinguish between trapped and escaping orbits \cite{Santos19} by monitoring the gradient of \Eq{RotNum} with respect to initial condition, and to determine the breakup of circles in nontwist maps \cite{Santos18}.

Methods for more accurately computing the rotation number accurately include a recurrence method based on continued
fraction expansions, \cite{Efstathiou01}, and conjugacy based Fourier methods for circle maps \cite{Seara06, Luque09}.
``Slater's criterion" \cite{Slater50, Slater67, Mayer88, Altmann06, Zou07} can be used to compute whether an orbit satisfies the same ordering as an irrational rotation; this method can be used to estimate $k_{cr} = 0.9716394$, slightly above Greene's value \cite{Abud15}.

Our paper proceeds as follows: We start in  \Sec{Chaos}  with a  description of the weighted Birkhoff method
in \Sec{WB}. In \Sec{Lyapunov}, we review the two standard methods for distinguishing between regular and chaotic orbits, 
namely  Lyapunov exponents and the 0--1 test of Gottwald-Melbourne.
 In \Sec{Comparison}  we compare the three methods for distinguishing chaos from regularity in the 
 case of the Chirikov standard map. 
In \Sec{IslandChains} we describe how to use the weighted Birkhoff average for 
non-chaotic orbits to distinguish between rotational circles and island chains. 
In \Sec{RotationalCircles},
after removing chaotic orbits and island chains, we are left with the rotational circles. 
We are able to create the critical function diagram, and describe the number theoretic properties 
of the rotation numbers for  rotational circles, showing that their behavior does not match that 
of randomly chosen irrational numbers.  In \Sec{NonStandard}, we apply our methods to three different 
generalizations of the standard map, namely a symmetric two-harmonic generalized standard map, a 
standard non-twist map, and an asymmetric two-harmonic map. 
We conclude in \Sec{Conclusion} with comments on how these methods can be applied 
to other  maps.

\section{Distinguishing chaos and regularity}\label{sec:Chaos}

In this section, we introduce the weighted Birkhoff method, and we 
 compare it to two different methods for distinguishing chaos from regular 
dynamics, namely Lyapunov exponents 
and the 0--1 test of Gottwald and Melbourne \cite{Gottwald09}.

\subsection{The weighted Birkhoff average}\label{sec:WB}
We now describe in more detail the method of weighted Birkhoff averages~\cite{Das16a,Das16b,Das17}. 
Since the source of error in the calculation of a time average for a quasiperiodic set 
occurs due to the lack of smoothness at the 
ends of the orbit, we use a windowing method similar to the methods used in signal processing. 
Let 
\[
	g(t) \equiv \left\{ \begin{array}{ll}  e^{-[t(1-t)]^{-1}}  & t \in (0,1) \\
	         								0	& t \le 0 \mbox{ or } t \ge 1 
						\end{array} \right.\;, 
\]
be an exponential bump function that converges to zero with infinite smoothness 
at $0$ and $1$, i.e.,
 $g^{(k)}(0) = g^{(k)}(1) = 0$ for all $k \in \bN$. 
To estimate the Birkhoff average of a function $h: M \to \bR$ efficiently and accurately 
for a length $T$ segment of an orbit, we modify \Eq{Birkhoff} to compute 
\beq{WB}
	\WB_{T}(h)(z) = \sum_{t = 0}^{T-1} w_{t,T} h \circ f^t(z) \;, 
\eeq
where 
\bsplit{SmoothedAve}
	S &= \sum_{t=0}^{T-1} g\left(\tfrac{t}{T}\right)  ,
	&w_{t,T} = \frac{1}{S} g\left(\tfrac{t}{T}\right) \;. 
\esplit
That is, the weights $w$ are chosen to be normalized and evenly spaced 
values along the curve $g(t)$. For a quasiperiodic orbit, the 
infinitely smooth convergence of $g$ to the zero function at the edges of the definition interval
preserves the smoothness of the original orbit. 
Indeed it was shown in \cite{Das18} that given a $C^\infty$ map $f$, a quasiperiodic orbit $\{f^t(z)\}$ 
with Diophantine rotation number, and a $C^\infty$ function $h$, it follows that \Eq{WB} is super-convergent: there are constants $c_n$, such that for all $n \in \bN$
\beq{WBerror}
	\left |\WB_{T}(h)(z) - \lim_{N \to \infty}B_N(h)(z) \right| <  c_n T^{-n} .
\eeq
Laskar~\cite{Laskar92} used a similar method to compute frequencies with a $\sin^2(\pi s)$
function instead of  a bump function, but 
this function is fourth order smooth  rather than infinitely smooth at the two ends, implying that the method 
converges as $\cO(T^{-4})$, see e.g., \cite[Fig. 7]{Das17}. 
By contrast, when an orbit is chaotic (i.e., has positive Lyapunov exponents), then \Eq{WB} typically converges much more slowly; in general it converges no more rapidly than the unweighted average of a random signal, i.e., with an error $\cO(T^{-1/2})$ \cite{Levnajic10, Das17}. 

A graph of the error in $WB_T$ for $h = \cos(2\pi x)$ as a function the number of iterates 
$T$ is shown on the left panel of \Fig{RotNumConv}. Here we have chosen $50$ orbits of \Eq{StdMap} for 
the parameter $k = 0.7$ with initial condition $x=0.45$, and $y$ evenly spaced between $0$ and $0.5$. 
For orbits that are independently identified as chaotic (red) 
$WB_T$ essentially does not decrease with $T$, however for orbits that lie on rotational 
(blue) or island (green) invariant circles, the error for all but three 
has decreased to machine precision, $10^{-15}$, when $T$ reaches $10^4$. 
Further, right panel of \Fig{RotNumConv} shows the convergence rate as a function of $y$, 
this time  for $1000$ orbits. Note that there 
is no evidence of superconvergence in \Fig{RotNumConv}: the convergence rate for \Eq{WBerror} 
has $n = 2-5.5$. Indeed, superconvergence was only observed in \cite{Das17} when extended precision computations were done. Nevertheless, there is a clear distinction between chaotic and regular orbits for $T \ge 10^4$.


\InsertFigTwo{ConvergenceRateCos}{convratea}{Convergence of the weighted Birkhoff average \Eq{WB} 
for orbits of the standard map at $k = 0.7$ for the function $h = cos(2\pi x)$. On left, the error of the computation is 
shown as a function of the number of iterates $T$  for $50$ initial conditions at $x =0.45$ with 
$y$ on a grid in $[0,0.5]$. On the right, the convergence rate is shown for $1000$ initial conditions 
at the same $x$ and $k$ values, where convergence rate was calculated using the errors before the 
values flattened out due to floating point errors, measured by where they have dropped below $10^{-13}$. 
In each case, the values are compared with $WB_T(\Omega)$ at $T = 10^5$. 
Using the distinguishing criteria in \Eq{DistCrit}, the red curves are identified as chaotic, the green 
curves as islands, and the 
remaining blue curves are thus the rotational invariant circles. }{RotNumConv}{3.0in}

To distinguish chaotic from regular dynamics, 
we compute \Eq{WB} for two segments of 
an orbit, using iterates $\{ 1,\dots,T\}$ and $\{ T+1,\dots, 2T \}$.
In the limit $T \to \infty$, these values should be the same. 
Therefore we can measure convergence rate by comparing them.  
In order to distinguish chaotic sets, we compute the 
number of consistent digits beyond the decimal point in our 
two approximations of $\WB(h)$, which is given by 
\beq{digits}
	dig_T = -\log_{10} \left|\WB_{T}(h)(z)-\WB_{T}(h)(f^T(z)) \right| \;.
\eeq
If $dig_T$ is relatively large, then the convergence is fast, meaning the  orbit is regular. 
If $dig_T$ is small, then the convergence is slow, meaning the orbit is chaotic. Three
examples are shown in \Fig{Histograms}(a) for a set of $1000$ initial points on 
a vertical line segment at $x = 0.321$ for three different values of $k$. For the smallest parameter,
$k=1.0$, a substantial fraction of the orbits are regular, and these have a distribution of $dig_{10^4}$ centered 
around $14$, nearing the maximum possible for a double precision computation. 
By contrast, when $k = 2.0$ there are only  chaotic orbits in the sample, 
and these have a distribution of $dig_{10^4}$ centered around $2$.  
Note that when $k = 1.0$ there are also orbits with $dig_T \in [6,13]$, and which seem to 
represent orbits trapped in islands that are either oscillatory invariant circles or 
weakly chaotic orbits between a pair of such invariant circles.

In order to determine the cutoff in $dig_T$ between regular and chaotic orbits, we computed a 
histogram (not shown)  of $dig_{10^4}$ for the Chirikov standard map for $500,000$ different 
starting points: a grid of $500$  $k$-values between $0.1$ and $2.5$, with $1000$ distinct initial 
conditions for each. This histogram has two large peaks, one at around $2$ and the 
second around $15$ (corresponding to the machine epsilon value). As is consistent with the
case $k = 1.0$ shown in \Fig{Histograms}(a), the lowest probability  occurs around $dig_T = 5$. 
In our calculations of chaos, we wish to err on the side of false positives of chaos, and thus we use a value 
of $5.5$ as our cutoff value to distinguish whether  orbits exhibit
regular or chaotic behavior. 

\InsertFigTwo {ChaosWBHist}{LyapunovHistx321}{Histograms of (a) Weighted Birkhoff accuracy, 
$dig_T$, and (b) finite-time Lyapunov exponent, $\lambda_T$, for orbits of the standard map 
with $k = 1.0$, $1.5$ and $2.0$. The initial conditions are $(0.321,y)$ with $1000$ values 
of $y$ on a uniform grid in $[0,1]$. 
(a) Histograms of $dig_T$ \Eq{digits}, for $h(x,y) = \cos(2\pi x)$ and $T=10^4$.
(b) Histograms of $\lambda_T$  \Eq{lyapunov}, for $T = 2(10)^4$, and $v = (0,1)^T$. }{Histograms}{3in}


Using this cutoff, the putative set of regular orbits with initial conditions along three 
vertical line segments, $x = 0.0$, $0.321$ and $0.5$ are shown in \Fig{WBDigits} for $k$ 
ranging from $0.1$ to $2.5$. When $x = 0.0$, the figure is dominated by the regular region 
around the fixed point islands surrounding $(0,0)$ and $(0,1)$--these points are elliptic 
up to $k = 4.0$. Other islands can also be seen; for example for $x = 0.0$ and for $x = 0.5$,
we can see the period two orbit
$(0,\tfrac12) \mapsto (\tfrac12, \tfrac12)$, which is elliptic until $k= 2.0$, where it period 
doubles.  By contrast the line $x = 0.321$ intersects fewer islands, and there appear 
to be no regular orbits when $k \ge 1.915$ with initial conditions on this line.
\InsertFig{WBdigits}{Using the weighted Birkhoff method, this figure shows  
the number of digits \Eq{digits} for orbits of the standard map \Eq{StdMap} 
with (a) $x_0 = 0$ and (b) $x_0 = 0.321$ and (c) $x_0 = 0.5$ for $y_0 \in [0,1]$ and 
$k \in [0.1,2.5]$. Here $dig_T$ is computed by computing the average \Eq{WB} for the 
function $h(x,y) = \cos(2\pi x)$ for $T = 2(10)^4$ steps. Initial conditions with $dig_T < 5.5$ 
are colored black. The value of $dig_T$ for the regular orbits is indicated in the color bar.}{WBDigits}{8in}
\subsection{Lyapunov Exponent and the 0--1 Test}\label{sec:Lyapunov}

In this section we  recall two other standard tests for chaos: positive Lyapunov exponents and the 0--1 test. The finite-time Lyapunov exponent is defined by
\beq{lyapunov}
	\lambda_T(v) = \frac{1}{T}\log \left(|Df^T(x_0,y_0) v| \right) , 
\eeq
where $Df$ is the Jacobian matrix, and $v$ is a generic deviation vector with $|v| = 1$. 
Histograms of $\lambda_T$ for three values of $k$ are shown in \Fig{Histograms}(b). 
As noted by \cite{Szezech05}, when there are regular and chaotic orbits, these histograms are typically bimodal.
For example, we observe that when $k = 1.5$ there is  a lower peak centered near $\lambda = 0$ with width of order $0.02$. This peak is well separated from the broader peak centered near $\lambda = 0.3$. The peaks are less well separated for smaller values of $k$; for example when $k = 1.0$ about $40\%$ of the orbits have $\lambda_T < 0.01$, and there a broader peak of presumably chaotic trajectories with $\lambda \in [0.08,0.2]$. However, these two distributions have some overlap near $\lambda \approx 0.05$. As $k$ grows,  the mean value of $\lambda$ increases and the lower peak of regular orbits disappears. 

To visualize the dependence of the exponents on $k$, we chose the same three lines of initial conditions shown in \Fig{WBDigits} for the weighted Birkhoff average. The resulting exponent, as a function of $y_0$ and the parameter $k$ of \Eq{StdMap} is shown in \Fig{LyapunovVsk}. In this figure, orbits with $\lambda < 9.5(10)^{-3}$ are colored black: these correspond to the regular orbits. 
As $k$ grows, the distribution in the chaotic region is peaked around a  growing value that 
reaches a maximum of $\lambda = 0.605$ when $k = 2.5$. Note that each of the panes of this figure is essentially the negative of the corresponding pane in \Fig{WBDigits}.

\InsertFig{LyapunovVsk3xT20K}{Lyapunov exponent $\lambda$ for orbits of the standard map \Eq{StdMap} with (a) $x_0 = 0$ and (b) $x_0 = 0.321$ and (c) $x_0 = 0.5$ for $y_0 \in [0,1]$ and $k \in [0.1,2.5]$. The exponent \Eq{lyapunov} is computed for $v = (0,1)^T$ and $T = 2(10)^4$. Black corresponds to $\lambda < 0.0095$. The value of $\lambda$ for the chaotic orbits is indicated in the color bar.}{LyapunovVsk}{8in}

The fraction of chaotic orbits can be estimated by removing orbits with $\lambda_T$ in the range of the lower peak. 
Since the value $\lambda = 0.05$ is a minimum in a histogram (not shown) of $\lambda$ values over
$k$ in $[0.1,2.5]$, 
 we chose this value as the cutoff between chaos and regularity. 
 The resulting fraction of ``chaotic" orbits 
as a function of $k$ is shown in \Fig{PctChaosLyap}. This fraction is strongly dependent 
on the line of initial conditions. For the lines of symmetry (e.g. $x = 0$ or $0.5$) of 
the standard map, the fraction of orbits trapped in regular islands is larger.

\InsertFig{PctChaosLyapT20K}{Using the Lyapunov method, this figure shows the 
fraction of orbits that have $\lambda > 0.05$ for three values of $x_0$, a uniform grid of 
$1000$ values of $y_0$, and $T = 20000$. When $k <4$ there is an island around the elliptic 
fixed point at $(0,0)$ that decreases the number of chaotic orbits found for $x_0 = 0$ (red curve), 
and when $k < 2$ the island around the elliptic period-two orbit through $(0.5,0.5)$ has a smaller, 
but similar effect for $x_0 = 0.5$ (blue curve). When $x_0 = 0.321$ (green curve), 
the regular regions around both of these elliptic orbits are  
not sampled when $k \ge 1.4$. 
These variations can be observed in \Fig{WBDigits} and \Fig{LyapunovVsk}. 
When $k \ge 1.915$, at most three of the $1000$ $y_0$ values are deemed to not be chaotic.
}{PctChaosLyap}{3.5in}

Another test for chaos is the 0--1 test of Gottwald and Melbourne \cite{Gottwald09}. 
This test involves computing a time series (here we use $\{\sin(2\pi x_t): t \in [0,T]\}$) 
from which a supplemental time series (called $(p_t,q_t)$ in \cite{Gottwald09}) is constructed 
and tested for diffusive behavior. This ultimately gives 
a parameter, $K_{median}$, that is ideally either 0, when the orbit is quasiperiodic, or $1$ 
when it is chaotic, and we use the cutoff $K_{median} > 0.5$ for chaos.  
Implementation of this test requires random samples of a frequency parameter. 
Using  $T = 1000$ and $100$ random samples gives an algorithm that is about $200$ times 
slower than computing Lyapunov exponents.
The resulting dichotomy between regular and chaotic orbits for this test is shown in \Fig{ZeroOneTest} 
for initial conditions at $x = 0.0$. This figure agrees well with those in \Fig{WBDigits}(a) 
and \Fig{LyapunovVsk}(a), though it appears to  identify slightly fewer orbits as chaotic 
 than the Lyapunov test: 
some orbits that were designated chaotic by Lyapunov exponent do not have $K_{median} > 0.5$.

\InsertFig{StdZeroOnevskx0T1K}{Chaotic region of the standard map with $x_0 = 0$ for 
$y \in [0,1]$ and $k \in [0.1,2.5]$ using a time series $\{\sin(2\pi x_t : 0 < t \le 1000\}$ 
and the 0--1 method of \cite{Gottwald09}. An orbit is deemed to be chaotic (colored white) 
if the 0--1 parameter $K_{median} > 0.5$.}{ZeroOneTest}{3.5in}

\subsection{Comparing the Methods}\label{sec:Comparison}

In this section we systematically compare the detection of chaos for the Chirikov standard map 
using the three different methods:  Lyapunov exponents, 0--1 test, and the weighted Birkhoff method.
Figure~\ref{fig:probabilities} shows the fraction of orbits identified as chaotic for initial 
conditions on the line $x_0 =0.0$ with a uniform grid of $y_0 \in [0,1]$. Note that the 
fraction identified as chaotic by the weighted Birkhoff method is generally above that 
for the Lyapunov and these are both generally above that for the 0--1 test. The difference 
is largest near $k = 1$, $1.5$ and $2.3$; these values correspond to major bifurcations 
in which regular islands and circles are destroyed. Nevertheless, the mean absolute deviation 
between the weighted Birkhoff and 0--1 test results is $2.6\%$.

\InsertFig{probabilities}{(b)  The fraction of chaotic 
orbits as a function of $k$ for initial conditions on the line $x_0 = 0.0$. 
The red curve shows the fraction of chaotic orbits computed using Lyapunov exponents 
with $\lambda > 0.05$, the blue curve shows the fraction with the 0--1 parameter $K_{median} > 0.5$, 
and the yellow curve shows the fraction of chaotic orbits computed using the weighted Birkhoff average 
with $dig<5.5$. For each $k$ value, 
 $1000$ initial conditions were used for the Lyapunov and weighted Birkhoff methods. 
 For the Lyapunov method $T = 2 (10)^4$, and for the weighted Birkhoff method $T=10^4$ 
 (which involves
 $2 (10)^4$ calculations, so is equivalent to the Lyapunov method).   For the 0--1 test,  
 $T = 1000$ and $500$ initial conditions were used for each $k$-value.}{probabilities}{3.5in}

Designating one of the methods as the ``ground truth" we can compare another method
using the True Skill Statistic, also known as the
Hanssen-Kuiper skill score  \cite{Woodcock78}:
\[
	\TSS = \frac{TP}{TP+FN} - \frac{FP}{FP+TN}.
\]
Here $TP$ (``true positive'') is the fraction of initial conditions
that are classified correctly as chaotic by the test method over the reference standard,
$FP$ is the fraction classified incorrectly as chaotic, 
$TN$ is the fraction that are correctly as non-chaotic, and 
$FN$ is the fraction classified incorrectly
as non-chaotic. The $\TSS$ ranges from $-1$ for a classification that always disagrees
with the reference, to $1$ for one that always agrees. An advantage of $\TSS$ is that it does not depend
upon the number of trials, just on the relative accuracy.  However, if we are comparing two predictions, the skill statistic does depend upon which prediction is designated as the ``ground truth": changing this designation is equivalent to exchanging $FP \leftrightarrow FN$.

An alternative measure is to simply count the percentage of correctly classified initial conditions,
\[
	R = \frac{TP + TN}{TP+FP+TN+FN},
\]
the ``ratio" of \cite{Woodcock78}.

For the comparison, in \Fig{probabilities} we computed the Lyapunov exponent and weighted Birkhoff average using a total orbit length 
of $2(10)^4$ iterates, for $k \in [0.1,2.5]$ on an evenly spaced grid of $500$ values. 
At each parameter value, we chose initial conditions on the line $x_0 = 0.321$ with $y_0 \in [0,1]$ on an evenly spaced grid of $1000$ points: thus thus there are $250,000$ trials. Since the 0--1 method is slower, we chose a smaller number of iterates, $T = 1000$, 
and only 500 initial conditions for each parameter value. 

The agreement between every pair of the three methods gave $R \approx 98\%$.
The agreement of the weighted Birkhoff method with the Lyapunov exponent changes only slightly if we vary cutoff $dig_T$, and the best agreement occurs when $dig_T = 4$.
However, $R$ is not very sensitive to this choice, and in fact varying $dig_T$ between 
$3.5$ and $6$ always results in agreement that is close to $98\%$. 

Fixing Lyapunov exponents  as ``ground truth" gave $\TSS = 0.96$ for both weighted Birkhoff and 0--1 comparisons. 
Comparing the 0--1 test to the weighted Birkhoff average as the ground truth gives $\TSS = 0.96$ as well. Indeed, the number of false positives and false negatives are roughly equal in the each of three comparisons: if they were exactly equal then $TSS$ would not change upon the choice of ground truth. 

To validate the methods, we also compared each method to itself with 
double the number of iterates. Comparing $T = 10^4$ to $2(10)^4$ for the $\lambda_T$ gave 
a different answer $0.8\%$ of the time. 
By contrast for the $\WB_T$, the results disagreed $0.4\%$ of the time. 
Thus the $2\%$ difference between methods cannot be explained as a result of the number of iterates but is a true difference in identification of chaotic orbits. 

The computation time for the Lyapunov exponent and weighted Birkhoff methods are 
roughly the same. For example with  $10^6$ initial conditions and $T = 10^4$  iterates,
each  method took around 6 minutes using Matlab on a Mac laptop. 
The 0--1 method was significantly slower.
The fact that the computation time for the weighted Birkhoff average and Lyapunov exponent   
is roughly the same is related to the fact that the Jacobian of the map \Eq{StdMap} is quite simple.
The implication is that the time needed to do the averaging required for the weighted Birkhoff method 
is roughly the same as the time needed to compute derivatives for the Lyapunov exponent. However, 
if the derivative were computationally more expensive, then the weighted Birkhoff method would have a 
speed advantage. 

The weighted Birkhoff method has another advantage, as we will illustrate in the next section: 
it gives an accurate calculation of the rotation number $\omega$ that 
we can use to distinguish between rotational invariant circles and island chains.

\section{Island chains}\label{sec:IslandChains}

The regular orbits of the 
Chirikov standard map are of two distinct topological types: rotational 
invariant circles and orbits within the island chains. 
We are primarily interested in studying the rotational invariant circles, and thus 
must look for a way to distinguish and remove orbits within island chains.

For a twist map Birkhoff's theorem implies that the rotational invariant circles are graphs, 
$x \mapsto (x,c(x))$. Generically the dynamics on each such circle is conjugate to an incommensurate rotation, implying that $\omega$ in \Eq{RotNum} is irrational. 

By contrast, around each elliptic period-$n$ orbit there is generically a family of trapped orbits forming a chain of $n$ islands. The regular orbits in these island chains are further partitioned into orbits that are quasiperiodic and those that are periodic relative to the $n^{th}$ power of the map. The latter, if elliptic, can again be the center of chains of islands. This gives rise to the familiar island-around-island structure. 
Each regular,  aperiodic orbit within a period-$n$ island chain is generically dense on a family of topological circles: these are \textit{oscillational} invariant circles. Nevertheless, the rotation number $\omega$, \Eq{RotNum} will average out the internal dynamics, resulting in a rational value that is the rotation number of the central period-$n$ orbit. Of course if one were to measure the rotation number of an oscillational circle relative to the periodic orbit that it encloses, one would generically find it to be irrational as well.

In \Sec{WB} we developed a highly-accurate method for removing chaotic orbits and for
computing $\omega$ for regular orbits. In \Sec{SmallDenom}, we establish a numerical 
method to remove regular orbits in island chains by determining which of the computed $\omega$ values 
are ``rational," and which are ``irrational."
In \Sec{BirkhoffIslands} we use this method to identify orbits within island chains.

\subsection{Numerical identification of rational numbers} \label{sec:SmallDenom}

We are interested in establishing a numerical method to determine whether a numerically computed 
number  determined using floating point arithmetic is representative of a rational or an irrational.
At the outset, this is not a well-posed question, since floating point representations of numbers are rational. 
The question becomes whether a numerical value is---with high probability---the 
approximation of a rational or an irrational number. 
In this section, we concentrate on a closely related question, and 
in the next section we show how the answer can be applied to establish rationality. 
Our question is: 
given a number $x$, and an interval
\beq{Idelta}
	I_\delta(x) \equiv (x-\delta, x+\delta) ,
\eeq
with some tolerance $\delta$, what is the rational number $p/q$ with the smallest
denominator in $I_\delta(x)$?

If, for a small $\delta$, there is a rational $p/q \in I_\delta(x)$ with a sufficiently 
\textit{small} denominator $q$, we would expect that $x$ is---to a \textit{good} approximation---given
by this rational. Whereas if all such rationals have \textit{large} denominators, 
we would expect that $x$ is an approximation of an irrational number. 
Actually, we will argue that if $q$ is \textit{too large}, $x$ is more likely 
an approximation of a rational number that just missed being in the interval. 
We will return to the question of what constitutes
small,  large, and too large for values of $q$, but first we discuss the question of how to actually find the 
value $p/q$ in a prescribed interval.

We denote the smallest denominator for a rational in an interval $I$ by
\beq{qmin}
	q_{min}(I) \equiv \min\{ q \in \bN : \tfrac{p}{q} \in I, p \in \bZ\} .
\eeq
The question of finding $q_{min}$ has been considered previously in~\cite{Beslin98,Forisek07, Citterio16},
and a closely related question is considered in~\cite{Charrier09}.

Given an interval $I$ in $\bR$, one would imagine that there are standard algorithms for  
finding the rational number $p/q$ in $I$ with $q = q_{min}(I)$. 
Indeed packages such as Mathematica and Matlab both have 
commands that appear to do this. However these algorithms use truncations of the continued 
fraction expansion \cite{HardyWright79}, and neither of them work correctly in the sense of 
finding the smallest denominator \cite{Forisek07}.
Recall that the continued fraction expansion for $x \in \bR^+$ is
\beq{contFrac}
	x = a_0 + \frac{1}{a_1 + \frac{1}{a_2+ \frac{1}{\ldots}}} \equiv [a_0;a_1,a_2,\ldots],  \quad a_i \in \bN, \quad a_0 \in \bN \cup \{0\}.
\eeq
Truncation of this path after a finite number of terms gives a rational ``convergent" of $x$:
\beq{convergents}
	\frac{p_k}{q_k} = [a_0;a_1,a_2,\ldots, a_k] .
\eeq
Convergents are \textit{best approximants} in the sense that if
\beq{ErrorBound}
	\left| \frac{p}{q} -x \right| < \frac{1}{2q^2} ,
\eeq
then $p/q$ is a convergent to $x$  \cite[Theorem 184]{HardyWright79}.
Moreover, at least one of any two successive convergents satisfies \Eq{ErrorBound}. 

However, the convergents are not necessarily the rationals with the smallest denominators in a given interval. As a simple example, the rational with the smallest denominator within $\delta = 10^{-3}$ of $\pi$ is $\tfrac{201}{64}$,
i.e., $q_{min}(I_{10^{-3}}(\pi)) = 64$.
However, this rational is not a convergent of the continued fraction $\pi = [3; 7, 15, 1, 292, 1, 1,...]$; indeed, the first convergent in the interval is $\tfrac{p_2}{q_2} = \tfrac{333}{106}$.


A correct algorithm (e.g., that proposed by Forisek~\cite{Forisek07}), is easiest to explain based on the \textit{Stern-Brocot} or \textit{Farey} tree. Every number in $\bR^+$ has a unique representation as a path
on this binary tree:
\beq{FareyPath}
	x = s_1s_2\ldots, \quad s_i \in \{L,R\}.
\eeq
The tree, whose first levels are sketched in \Fig{FareyCF}, is constructed beginning with the root values $\tfrac01$ and $\tfrac10$. Subsequent levels are obtained by taking the mediants of each
neighboring pair:
\beq{Mediant}
	\frac{p_m}{q_m} = \frac{p_l}{q_l} \oplus \frac{p_r}{q_r} \equiv \frac{p_l+p_r}{q_l+q_r} .
\eeq
Level zero of the tree is the mediant of the roots, $\tfrac11$; it is defined to have the null path.
If $x < \frac11$, then its first symbol is $L$, and  if $x > \frac11$, then its first symbol is $R$.
At level $\ell$ of the tree, $2^\ell$ new rationals are added, the mediants of each
consecutive pair. The left and right parents are neighboring rationals that have level less than $\ell$. 
Every consecutive pair of rationals at level $\ell$ are \textit{neighbors} in the sense that 
\beq{Neighbors}
	p_rq_l - p_lq_r = 1.
\eeq
A consequence is that $p_m$ and $q_m$ are coprime.
 
For $\ell = 1$, the new mediants $\tfrac12 = \tfrac01 \oplus \tfrac11$ and 
 $\tfrac21 = \tfrac11 \oplus \tfrac10$ are added to give the level-two Farey sequence 
$\frac01,\frac12,\frac11,\frac21,\frac10$.
Then the $2^{2}$ mediants of each neighboring pair are added to give
$2^3$ level-three intervals, see \Fig{FareyCF}. Since the level-three rational $\frac23 > \frac12$, to the right of its level-two parent, then $\frac23 = LR$.
Similarly $\frac32$ is to the left of its level-two parent $\frac21 = R$, so $\frac32 = RL$.

\InsertFig{Farey+CF}{The continued fraction expansion and some entries on the Stern-Brocot or Farey tree.
Each rational has two possible finite continued fractions but a unique Farey path. 
See the appendix for the relationship between the two.}{FareyCF}{4in}


The Farey path \Eq{FareyPath} for any $x \in \bR^+$ is the unique path of left and right 
transitions that lead to $x$ starting at $\tfrac11$. Every rational has a finite path and 
every irrational number has an infinite path \cite{HardyWright79}. \Alg{FareyPath} in the 
appendix computes 
the Farey path, up to issues of
floating point accuracy and a stopping criterion.

The Farey expansion allows one to find the rational with smallest denominator in any interval:
\begin{lem}[Smallest Rational]\label{lem:SmallestDenom} The smallest denominator rational in an interval $I \subset \bR^+$ is the first rational on the Stern-Brocot tree that falls in $I$.
\end{lem}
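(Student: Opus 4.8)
The plan is to exploit the structure of the Stern--Brocot tree directly. Recall that every node of the tree is the mediant of its two nearest ancestors $\tfrac{p_l}{q_l} < \tfrac{p_r}{q_r}$ (its ``left parent'' and ``right parent''), which are neighbors in the sense of \Eq{Neighbors}, and that descending one level from an interval $\bigl(\tfrac{p_l}{q_l},\tfrac{p_r}{q_r}\bigr)$ replaces it by one of the two subintervals $\bigl(\tfrac{p_l}{q_l},\tfrac{p_m}{q_m}\bigr)$ or $\bigl(\tfrac{p_m}{q_m},\tfrac{p_r}{q_r}\bigr)$ with $\tfrac{p_m}{q_m}=\tfrac{p_l}{q_l}\oplus\tfrac{p_r}{q_r}$. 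The key fact I would isolate first is a \emph{monotonicity lemma}: if $\tfrac ab$ is any rational strictly between two neighbors $\tfrac{p_l}{q_l}$ and $\tfrac{p_r}{q_r}$ (written in lowest terms), then $b \ge q_l + q_r$, with equality exactly for the mediant. This follows from the unimodularity identity: writing $\tfrac ab$ as a positive combination, $a = \mu p_l + \nu p_r$, $b = \mu q_l + \nu q_r$ with $\mu,\nu \ge 1$ integers (which is forced because $\begin{pmatrix}p_l & p_r\\ q_l & q_r\end{pmatrix}$ has determinant $\pm 1$, hence is invertible over $\bZ$, and positivity of $\mu,\nu$ comes from $\tfrac{p_l}{q_l} < \tfrac ab < \tfrac{p_r}{q_r}$), so $b \ge q_l + q_r$.

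With that lemma in hand, the proof proceeds by tracking the Farey search. Let $I \subset \bR^+$ be the given interval. Starting from the root interval $\bigl(\tfrac01,\tfrac10\bigr)$, repeatedly descend: at each stage we hold an interval $J = \bigl(\tfrac{p_l}{q_l},\tfrac{p_r}{q_r}\bigr)$ between neighbors, with mediant $\tfrac{p_m}{q_m}$, and we have not yet produced any rational lying in $I$. If $\tfrac{p_m}{q_m} \in I$, the algorithm outputs it and stops; otherwise $\tfrac{p_m}{q_m}$ lies on one side of $I$ (it cannot lie outside $J$), so $I$ is contained in one of the two child subintervals, and we recurse into that one. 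I would then argue two things. First, \emph{the process terminates}: every rational in $I$ has a finite Farey path, and the endpoints of the nested intervals $J$ strictly converge toward the rationals of $I$; more carefully, since $I$ has positive length it contains a rational, that rational appears at some finite level as a mediant, and the nested intervals cannot skip past it. Second, \emph{the first mediant found in $I$ has the smallest denominator among all rationals in $I$}. For the second point: let $\tfrac{p_m}{q_m}$ be the first mediant that falls in $I$, produced while subdividing $J = \bigl(\tfrac{p_l}{q_l},\tfrac{p_r}{q_r}\bigr)$, so $q_m = q_l + q_r$. Any rational $\tfrac ab \in I$ also lies in $J$ (since $I \subseteq J$ at that stage, $I$ having survived all earlier descents), and $\tfrac ab$ is strictly between the two neighbors $\tfrac{p_l}{q_l}, \tfrac{p_r}{q_r}$ unless it equals one of them; but those endpoints are not in $I$ (they were rejected, or are the original roots which are not finite positive rationals in the relevant sense), so $\tfrac ab$ is strictly interior to $J$ and the monotonicity lemma gives $b \ge q_l + q_r = q_m$. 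Hence $\tfrac{p_m}{q_m}$ realizes $q_{min}(I)$.

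The main obstacle I anticipate is the careful bookkeeping at the endpoints and the termination argument, rather than the core inequality. One must be precise about half-open versus open intervals, about whether an endpoint of $J$ can itself lie in $I$ (in which case it would have been found earlier, contradicting minimality of the stopping level — so this case is vacuous by the time we stop), and about the degenerate initial step where the ``neighbors'' are $\tfrac01$ and $\tfrac10$ with $q_l + q_r = 1$, correctly giving $\tfrac11$ as the unique denominator-$1$ candidate. Termination needs the observation that the interval endpoints at level $\ell$ have denominators growing at least like Fibonacci numbers, so the intervals shrink and any fixed rational in $I$ is eventually reached or bracketed. Once these edge cases are dispatched, the statement follows immediately. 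I would also remark that uniqueness of the smallest-denominator rational is \emph{not} claimed (e.g. an interval may contain two rationals with the same minimal denominator), and indeed the lemma only asserts that the first one encountered on the tree attains the minimum.
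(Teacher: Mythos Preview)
Your approach is essentially the same as the paper's: both descend the Stern--Brocot tree until the first mediant lands in $I$, then argue that every other rational in $I$ has larger denominator. Your unimodularity argument for the monotonicity lemma makes explicit what the paper leaves as ``descendants are formed by mediants, hence have larger denominators,'' and your caution about uniqueness is in fact warranted---the paper's proof asserts uniqueness, but this fails when $I$ contains several integers, since the root $\tfrac10$ has denominator zero and the first right-descents do not increase the denominator.
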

The proof of this lemma, from \cite{Forisek07}, is given in \App{SmallDenom}.
An alternative version of this result using continued fractions can be found in \cite{Beslin98}.

An algorithm for finding  $q_{min}(I_\delta(x))$, based on \Lem{SmallestDenom} is given in 
the appendix in \Alg{SmallDenom}. For example, for $x = 0.12 = \tfrac{3}{25} = L^8R^2 = [0;8,3]$, the sequence of
Farey approximants is
\[
	\tfrac11, \tfrac 12, \tfrac 13 , \tfrac 14, \tfrac 15, \tfrac 16, \tfrac 17, \tfrac 18, \tfrac 19, \tfrac{2}{17},\tfrac{3}{25} .
\]
Given $\delta = 0.005$ for example, the Farey interval $(\tfrac17, \tfrac18) \supset I_\delta$, and the mediant $\tfrac{2}{17} \in I_\delta$ since $0.12 -\tfrac{2}{17} \approx 0.0024 < \delta$. Thus from the algorithm we obtain
\[
	\mbox{SmallDenom}(0.12,0.005) = [2,17] \, \Rightarrow \, q_{min}(I_{0.005}(0.12)) = 17 .
\]
As noted in \cite{Forisek07}, the built-in routines of standard mathematical software do not always compute the smallest rational approximation correctly. For example, the built-in Matlab command ``rat" gives $\mbox{rat}(0.12,0.005) = [3,25]$, giving $x$ itself,
since the second convergent $\tfrac18 = 0.12 + 0.005$, is not in $I$.
The point is that the intermediate convergents of the Farey path can satisfy the approximation criterion before the principal convergent of the continued fraction, and this can happen whenever the Farey path is not alternating $\ldots LR \ldots$ or equivalently the continued fraction elements are not all $1$'s. 

To determine the ``typical" size of a denominator in an interval $I$, we show in \Fig{DenomDist} a
histogram of the minimal denominator computed using  \Alg{SmallDenom} in the appendix for 
randomly chosen floating point numbers in $(0,1)$ with a uniform distribution. For this case, when $\delta = 10^{-12}$, the mean minimal denominator appears
to be close to $10^6 = 	\delta^{-1/2}$. The distribution is not log-normal: the data is significantly more concentrated around the mean than a normal distribution with the same standard deviation.
Over the range $\delta = [10^{-4}, 10^{-14}]$, the mean log-denominator obeys the relation
\beq{MeanDenom}
	\langle \log_{10} q_{min} \rangle = -\tfrac12 \log_{10}\delta - 0.05 \pm 0.001 ,
\eeq 
and in this same range of $\delta$ values, the standard deviation is nearly constant,
\beq{StdDevDenom}
 \sigma = 0.2935 \pm 0.0006.
\eeq
Further support for this statement is found in \Fig{cumulative}, which shows that 
for $\delta = 10^{-tol}$, the probability that $q_{min}$ is in the range 
$10^{tol/2 \pm s}$ does not depend on the choice of $tol$.  Indeed, the curves in this graph
were obtained from only $10^4$ random trials:
if more values were randomly chosen, it would be 
impossible to distinguish between these distribution plots.

\InsertFig{ConstantTypeVsRand12}{Probability density of $\log_{10}(q_{min})$ computed by appendix 
\Alg{SmallDenom} with $\delta = 10^{-12}$ for $10^8$ randomly chosen numbers in $(0,1)$ (black). 
This distribution has mean $5.9497$, mode $5.9662$, standard deviation $\sigma = 0.29333$, and kurtosis $6.3073$.
The red curve shows the normal distribution with the same mean and standard deviation. 
Also shown is the histogram for $10^8$ randomly chosen numbers of constant type with $A = 10$ (green). 
}{DenomDist}{3.5in}

\InsertFig{cumulative}{A graph of the probability that 
$q_{min} \in 10^{tol/2}[10^{-s},10^{s}]$, 
for $\delta = 10^{-tol}$ for $10^4$ randomly 
chosen $x \in [0,1]$. The blue curve is for $tol=4$ and the red for $tol=14$
(These curves are nearly indistinguishable).
The yellow curve shows the probability for numbers of constant type with $A = 10$.
The probability for a normal distribution with standard deviation  \Eq{StdDevDenom} is the dashed curve.}{cumulative}{3.5in}

The mean of our observations \Eq{MeanDenom} is consistent with the expectation from \Eq{ErrorBound}. Indeed,
for any $\delta$, then there is a convergent with $|x-p/q| < \delta$, with a denominator that must 
satisfy $q \ge (2\delta)^{-1/2}$. Since the minimum denominator is no more than this, we 
expect that $q_{min} \sim (2\delta)^{-1/2}$, and thus
\[
	\log_{10}{q_{min}} \sim -\tfrac12 \log_{10} \delta -0.15 , 
\]
which is not far from the observation \Eq{MeanDenom}. 

A related result was obtained by \cite{Stewart13}: for intervals
of the form $J_N = (\frac{i-1}{N},\frac{i}{N}]$, the mean smallest denominator in grows asymptotically as
\[
	\langle q_{min}(J_N) \rangle \sim C N^{1/2} ,
\]
with a coefficient $1.35 < C <2.04$. Since these intervals are of size $\tfrac{1}{N} = 2\delta$, this gives
\[
	\log_{10}\langle q_{min} \rangle \sim -\tfrac12 \log_{10} \delta + K, \quad K \in [-0.020,0.159] .
\]
Note that since the logarithm is convex, Jenson's inequality implies that \Eq{MeanDenom} is no larger than this result. We are not aware, however, of any results in the literature that imply the validity of \Eq{MeanDenom} or \Eq{StdDevDenom}.

As a second numerical experiment, we consider numbers of \textit{constant type}; that is numbers that have bounded continued fraction elements: $\sup_k\{a_k\} = A <\infty$. 
Such numbers can be thought of as ``highly irrational" in the sense that they are Diophantine
\Eq{Diophantine}, with $\tau = 1$, and $c > \tfrac{1}{A+2}$. Conversely, if $x$ is Diophantine  with constant $c$ then $A < \tfrac{1}{c}$ \cite{Shallit92}.
This class of numbers is especially important in the context of area-preserving maps: it was conjectured that invariant circles with constant type rotation numbers are locally robust and that every circle that is isolated from at least one side has constant type \cite{MacKay92c}.

For the numerical experiment shown in \Fig{DenomDist}, we chose rational numbers with continued fractions of length 40, with $a_i \le 10, i = 1,\ldots 40$ chosen as iid random integers. Note that this means
that every trial $x$ is rational; however, 
the denominator of these rationals is at least as large as the case $a_i = 1$, which gives the Fibonacci $F_{40} \approx 1.08(10)^8$. The resulting smallest denominator distribution is the green histogram in \Fig{DenomDist}.  The cumulative distribution of these
numbers is also shown in \Fig{cumulative}, which shows that the probability that 
$Prob(|\log_{10}(q_{min})-tol/2| >0.728) = 1\%$.

As mentioned previously, rational numbers nearby a given a value of $x$ can result in
 both extremely small and extremely large values of $q_{min}(I_\delta(x))$. 
To demonstrate this, \Fig{denomspikes} shows a plot of $q_{min}(I_{10^{-5}}(x))$
for evenly spaced $x$ values between $0.095$ and $0.105$.
The dots below the $x$-axis are centered at each rational with a denominator  
$q \le 80$; the size of each dot is inversely proportional to $q$. Note that in the 
vicinity of each dot, there is a small region where $q_{min}$ drops to the corresponding small value 
of $q$, but additionally, there is a larger interval in which $q_{min}$ becomes much larger than 
average, with a larger jump near smaller denominators. 
Dynamically these orbits correspond to orbits that are limiting on the separatrices of 
islands, and hence are chaotic. 

The main takeaway message from the ``typical size'' experiments in this section is that
 that numbers outside the main peak of the distribution in \Fig{DenomDist} correspond to those ``close" to rationals. In the next section, we will discard such rotation numbers to filter for candidates for rotational invariant circles. 

\InsertFig{denomspikes}{A plot of the smallest denominator, $q_{min}(I_\delta(x))$,
in an interval \Eq{Idelta} 
for $\delta = 10^{-5}$ and $10^4$ values of $x \in [0.095,0.105]$.  The dots below 
the $x$-axis indicate the size of the denominator of each rational number in the interval 
with a denominator up to $80$; larger dots correspond to smaller denominators. There is a 
spike in denominator size immediately outside the interval around small denominator rationals.
The mean log-denominator, \Eq{MeanDenom}, is shown by the dashed (red) line.}{denomspikes}{3.5in}

\subsection{Identification of island chains using the weighted Birkhoff average}\label{sec:BirkhoffIslands}

In this section, we use the weighted Birkhoff method to obtain an accurate computation of the rotation number $\omega$ 
defined  for the Chirikov standard map in \Eq{RotNum}. 
Namely, 
\beq{WBrotnum}
    \omega(z) = \WB(\Omega) (z). 
\eeq
Using this, we can distinguish rotational invariant circles from orbits in 
island chains by determining whether the computed value of
$\omega$ is an approximation of a  rational or irrational number as follows. 
Fix $tol$ and let $\delta = 10^{-tol}$. For a rotation number $\omega$, we find 
$q_{min}(I_\delta(\omega))$ in \Eq{qmin}, the smallest  denominator of a rational number within distance 
$\delta$ of $\omega$. In most of our numerics, we have chosen $tol = 8$. To distinguish between 
rationals and irrationals, for each rotation number $\omega$ define the absolute deviation
\beq{irrational}
    dev_\omega = |\log_{10}(q_{min}(I_\delta(\omega)) ) - tol/2|. 
\eeq
For a fixed cutoff value $s$, we remove the orbits within island chains  as follows. 
Let $z$ be an initial condition of a regular orbit with associated rotation number $\omega$. 
If  $dev_\omega>s$, then we discard $z$ as a member of an island chain. Note that this 
is equivalent to saying that $q_{min}$ is outside the range $10^{tol/2 \pm s}$.

It remains to choose a cutoff value $s$. In our numerics, when we wish to be conservative about identifying 
rotational circles, we have used the cutoff value $s = 0.3375$, which implies that 
we have kept  slightly above $81\%$ of randomly chosen values, as 
can be seen in \Fig{cumulative}. This corresponds to 
choosing only irrational numbers that are very badly approximated by rationals with 
small denominators. 

Now that we have established all of our criteria for distinguishing rotation numbers, 
we summarize the particular values we have used in most of our numerical calculations  
as two criteria: 
\beq{DistCrit}
\begin{array}{ll}
\mbox{Chaos criterion: } & dig_T < 5.5 \mbox{ for } \WB_T(\cos(2 \pi x)),  \\
\mbox{Irrationality criterion:} &  dev_\omega <0.3375 \mbox{ for } tol = 8. 
\end{array}
\eeq
In each case, for each initial condition $(x_0,y_0)$, we compute an orbit and  
determine whether the orbit is chaotic using the above chaos criterion. We also compute the rotation 
number $\omega$ using $\WB_T(\Omega)$ and determine whether the orbit is a rotational invariant 
circle using the irrationality criterion.

\section{Rotational invariant circles}\label{sec:RotationalCircles}

Using the strategy of \Sec{BirkhoffIslands} for eliminating rationals, we can now remove orbits that 
are contained in island chains. 
We show in \Fig{WBRegular} the rotation number $\omega$ for initial conditions $(x_0,y_0)$ 
that are identified to lie on rotational invariant circles using distinguishing criteria \Eq{DistCrit}.

\InsertFig{critheightcombo}{The rotation number computed for rotational invariant circles orbits of the standard map \Eq{StdMap} 
with (a) $x_0 = 0$ and (b) $x_0 = 0.321$ and (c) $x_0 = 0.5$ for $y_0 \in [0,1]$ and $k \in [0.1,1.0]$. 
The computations are done using $T = 2 (10)^4$,  using the distinguishing criteria in \Eq{DistCrit}.  
}{WBRegular}{7.5in}

The panels in \Fig{WBRegular} strongly resemble the critical function computed using Greene's method for the standard map \cite{Marmi91,MacKay92c}. In this method, one typically chooses a set of noble irrational numbers, and finds the threshold of instability for a long periodic orbit that is close to each of these nobles. The periodic orbits used in these computations are those that are symmetric under the reversor for \Eq{StdMap}; for example, every elliptic, symmetric rotational orbit is observed to have a point on the line $x = 0$. An advantage of our current method is that symmetry is not required.

\InsertFig{proprotcircs}{Fraction of orbits of \Eq{StdMap} from \Fig{WBRegular} that are on invariant circles for $1000$ initial
conditions with $y_0 \in [0,1]$ on three vertical lines as shown. The largest fraction occurs when $x = 0.5$, as this line tends to avoid many of the larger islands.}{ProportionCircles}{3.5in}

It is believed that there are no rotational invariant circles for the standard map above $k_{cr} = 0.971635406$, and that the last circle has the golden mean rotation number \cite{Greene79, MacKay93b}. It was proven, using Mather's ``converse KAM" theorem and interval arithmetic that there are no rotational circles when $k > \tfrac{63}{64}$ \cite{MacKay85}.  In \Fig{ProportionCircles}, we show how the fraction of initial conditions that are identified as rotational circles in \Fig{WBRegular} varies with $k$. By $k = 0.9685$, $99.9\%$ of the circles are destroyed and the fraction drops to zero at $k = 0.9712$, though there is one misidentified as a circle at $k=0.9766$. The accuracy of these computations is limited by the fact that the initial conditions are fixed to a grid in $y_0$.

As a more precise test of the efficacy of the weighted Birkhoff average to determine $k_{cr}$, we used continuation to find an orbit on the line $(0.321,y_0)$ with the fixed rotation number $\gamma^{-1} = \tfrac12(\sqrt{5}-1)$ when $T = 2(10)^4$. A computation of $dig_T$, \Eq{digits} can then be used to 
determine if the orbit is not chaotic.
For the computation shown in \Fig{continuegolden}, $dig_T = 12$ at $k = 0.9706$, and drops to $6$ at $k = 0.9731$, with a precipitous drop just as the curve crosses $k_{cr}$. As an example, when $k =  9697/9980 \approx 0.971643$, the initial condition with $\omega = \gamma^{-1}$ has $y_0 =0.676535782378533$. Though this orbit no longer lies on an invariant circle since $k > k_{cr}$, iteration shows that it remains localized to what appears to be a circle for hundreds of millions of iterations. 

\InsertFig{continuegold321c}{Computation of $dig_T$ for the golden mean circle for $T = 2(10)^4$ for $500$
values of $k \in [0.95,1.00]$.
Continuation is used to find the initial condition $(0.321,y_0)$ that has $\omega =\gamma^{-1}$. The drop of
$dig_T$ from $12$ to $6$ indicates that the circle is destroyed for a parameter value in $(0.9706,0.9731)$. }{continuegolden}{3.5in}

We now focus on the number theoretic properties of rotation numbers for robust circles. 
It is thought that the rotation numbers of the more robust invariant circles should have continued fraction elements 
with more elements $a_i = 1$ \cite{Greene79, MacKay92c}.
To test this, we plot the distribution of continued fraction elements, $a_n$, for the rotation number of 
invariant circles in \Fig{CompareCFtoGK}. The expected distribution for randomly chosen 
irrationals is the Gauss-Kuzmin distribution \cite{Shallit92}, $P(a_i=k) = \log_{2}(1+ 1/(k(k+2)))$. 
When $k$ is relatively small, the observed distribution follows the Gauss-Kuzmin distribution closely, 
at least for $a_n \le 10$; but for $k = 0.95$, when most circles have been destroyed,
the probability of $a_n = 1$ or $2$ is larger than would be predicted for random irrational numbers,
and the probability that $a_n \ge 8$ is at least four times smaller than the Gauss-Kuzmin value.

\InsertFig{CompareCFtoGK}{Probability distribution for the occurrence of continued fraction 
elements of the rotation number for rotational invariant circles of the standard map for 
two values of $k$. These were computed using $T = 10^4$ iterates, and a grid of $5(10)^4$ 
initial conditions at $x = 0.321$,  using the distinguishing criteria in \Eq{DistCrit}. 
When $k = 0.3$ (blue), we found $30176$ invariant circles, 
and when $k = 0.95$ (red), we found $682$.  The black curve shows the Gauss-Kuzmin distribution, 
which is the distribution of elements for a random irrational chosen with uniform probability 
in $[0,1]$. }{CompareCFtoGK}{3.5in}

\section{Generalizations of the Standard Map}\label{sec:NonStandard}

The method we have developed to find rotational invariant circles works equally well 
for other area-preserving maps. As a first example, we consider two-harmonic generalized 
standard map \Eq{StdMap} with the force
\beq{TwoHarmonicMap}
	F(x) = -\frac{k}{2 \pi} \left( \sin(\psi)\sin(2\pi x) + \cos(\psi) \sin(4\pi x)\right) ,
\eeq
that was first studied in \cite{Greene87} (see \cite{Simo18} for later references).
A phase portrait of this map, analogous to that shown for the standard map in \Fig{ChaosComboStd},
is shown in \Fig{ChaosComboTwoH} for the value $\psi =  0.7776$. Note
that at these parameters there are invariant circles in four narrow bands.  
The set of circles as a function of $k$ is shown in \Fig{TwoHarmonic}.
This figure is similar to \cite[Figure 12(b)]{Fox14}, where the critical parameters were computed
for a set of $256$ noble rotation numbers. In that case the last invariant circle,
with $\omega \approx 0.247$, was destroyed at $k \approx 0.613$. The numerical experiment here shows that at least $99.9\%$ of the invariant circles are destroyed when $k >  0.61850$. The last invariant circle in our sample appears
to have
\[
	\omega = 0.239184971708802 =[0;4,5,1,1,8,8,5,8,8,1,\ldots] ,
\]
with $q_{min}(I_{10^{-8}}) = 13153$.
\InsertFig{ChaosComboTwoH}{
The dynamics of the two-harmonic  map, with force \Eq{TwoHarmonicMap} 
for $k = 0.5$ and $\psi = 0.7776$. The weighted Birkhoff method
distinguishes chaotic orbits (upper right), islands (lower left), and 
rotational circles (lower right). 
The rotation number of each nonchaotic orbit is color-coded (color bar at right).
The computations were performed for a grid of $1000^2$ initial conditions in 
$[0,1]\times[-0.75,0.75]$ with $T=10^4$,  using the distinguishing criteria in \Eq{DistCrit}.}{ChaosComboTwoH}{4in}

\InsertFigTwo{TwoHarmonicProp}{TwoHarmonicy0vsk}{(a) Fraction of orbits of the map \Eq{StdMap}, with 
two-harmonic force \Eq{TwoHarmonicMap} for $\psi = 0.7776$, that are chaotic and are rotational circles for initial 
conditions along the line $x_0 = 0.35$. 
(b) The rotation number of the rotational circles as a function of initial $y$ and the parameter $k$.
As in \Fig{ChaosComboTwoH}, $T=10^4$ with distinguishing criteria in \Eq{DistCrit}.
}{TwoHarmonic}{3.in}

A similar, well-studied map is the standard nontwist map (see \cite{Fox14,Santos18} for references). This map is of the form \Eq{StdMap} with the standard force, but with the frequency map
\beq{NonTwist}
	\Omega(y) = y^2 - \delta .
\eeq
The phase space of the dynamics for $k=1.5$ is shown in \Fig{ChaosComboNonTwist} for $\delta = 0.3$. At these parameter values, 
there are large chaotic regions around islands with rotation number $0$ (colored green) and a band of rotational circles near the minimum of $\Omega$ (colored blue). The most robust circles tend to be the \textit{shearless circles};
they cross the line $y=0$ where $\Omega'(y) = 0$.
The fraction of chaotic orbits and rotational circles as $k$ varies is shown in \Fig{NonTwist}.
For the $1000$ initial $y$ values in our experiment, the last detected rotational circle is at $k = 2.7725$
for $(x_0,y_0) = (0.35,-0.2620)$ with the rotation number
\[
	\omega = -0.255234160728417 = [-1;1, 2,1, 11, 5,4,7,19,\ldots] ,
\]
with $q_{min}(I_{10^{-8}}) = 7260$. 
We have independently verified that there are no rotational circles for  $k \ge 2.79$ by direct iteration.

\InsertFig{ChaosComboNonTwist}{The dynamics of the standard nontwist map, with frequency
\Eq{NonTwist} for $k=1.5$ and $\delta = 0.3$. The weighted Birkhoff method
distinguishes chaotic orbits (upper right), islands (lower left), and 
rotational circles (lower right). 
The rotation number of each nonchaotic orbit is color-coded (color bar at right).
The computations were performed for a grid of $1000^2$ initial conditions in 
$[0,1]\times[-0.75,0.75]$ with $T=10^4$,  using the distinguishing criteria in \Eq{DistCrit}.}{ChaosComboNonTwist}{4in}
\InsertFigTwo{NontwistProp}{Nontwisty0vsk}{(a) Proportion of orbits of the standard nontwist map with frequency map
\Eq{NonTwist} when $\delta = 0.3$ that are chaotic and are rotational circles. 
(b) The rotation number of the rotational circles as a function of initial $y$ and the parameter $k$ 
for orbits with $x_0 = 0.35$. As in \Fig{ChaosComboNonTwist}, $T=10^4$ with distinguishing criteria in \Eq{DistCrit}.
}{NonTwist}{3.in}

Finally, we consider an asymmetric two-harmonic map \Eq{StdMap} with the force
\beq{AsymForce}
	F(x) = -\frac{k}{2 \pi} \left( \sin(\psi)\sin(2\pi x) + \cos(\psi) \cos(4\pi x)\right) ,
\eeq
studied in \cite{Fox14}. This map does not have the usual $x \mapsto -x$ reversor of the 
standard map \Eq{StdMap}, and therefore its periodic orbits are not aligned by a symmetry. Phase 
portraits for $k = 0.2$ and $\psi = 0.7776$ are shown in \Fig{ChaosComboAsym}, and the fraction of circles as a function of $k$ in \Fig{Asym}.

\InsertFig{ChaosComboAsym}{The dynamics of the asymmetric two harmonic map \Eq{AsymForce} for $k=0.2$ and $\psi=0.7776$. 
The weighted Birkhoff method distinguishes chaotic orbits (upper right),
islands (lower left), and rotational circles (lower right). 
The rotation number of each nonchaotic orbit is color-coded (color bar at right).
The computations were performed for a grid of $1000^2$ initial conditions in 
$[0,1]^2$ with $T=10^4$ with distinguishing criteria in \Eq{DistCrit}.}{ChaosComboAsym}{4in}
\InsertFigTwo{AsymProp}{Asymy0vsk}{(a) Proportion of orbits of the asymmetric standard
map with force \Eq{AsymForce} and $\psi = 0.7776$ that are chaotic, and rotational circles. 
(b) The rotation number of the rotational circles as a function of initial $y$ and the
parameter $k$ for orbits  with $x_0 = 0.35$. As in \Fig{ChaosComboAsym}, $T=10^4$
with distinguishing criteria in \Eq{DistCrit}.
}{Asym}{3.in}

\section{Conclusions and future work}\label{sec:Conclusion}

The weighted Birkhoff average \Eq{WB} and the distinguishing criteria \Eq{DistCrit} have been
shown to efficiently categorize orbits as chaotic, trapped in islands, or
quasiperiodic on rotational circles.
Using only $T = 10^4$ iterations, the rotation number of regular orbits
is typically known to machine precision, as shown in \Fig{RotNumConv}.
By contrast the weighted Birkhoff average of
chaotic orbits converges much more slowly, and this allowed us to identify chaotic trajectories.
Orbits trapped in islands have rational rotation numbers, and 
we are able to identify these using the distribution, shown in \Fig{DenomDist},
of the minimal denominator in an interval of size $\delta$ defined
by $q_{min}(I_\delta)$ in \Eq{qmin}.

The weighted Birkhoff method has the advantage that it does not rely on the reversing symmetry used
to find periodic orbits in Greene's residue method. Using a total orbit length of $2(10)^4$, 
we estimated the break-up parameter for the golden mean invariant circle to $0.3\%$ accuracy,
as seen in \Fig{continuegolden}. While this accuracy does not compete with that of Greene's method,
our method can be applied, as we have seen in \Sec{NonStandard} to more general, asymmetric and nontwist maps.

This method does not require fixing the rotation number in advance, adding
flexibility since, whereas the golden mean is established as  the most robust rotation number for the
standard map, the rotation number of the most robust invariant circle in 
a general map is not generally known. For example, the relationship between 
robustness of invariant circles and noble rotation numbers  
 is less well established for asymmetric maps \cite{Fox14}, and 
 we have demonstrated that the weighted Birkhoff method can compute robustness of invariant 
 circles for asymmetric maps.

Another potential application of the weighted Birkhoff average is that it can be applied to higher-dimensional maps, 
with, say, $d$-dimensional invariant tori. Here (with one exception \cite{Fox13}),
Greene's method no longer applies. There are several difficulties in any attempt to extend
Greene's method; one is that there is no completely satisfactory continued fraction
algorithm for multi-dimensional frequency vectors. To generalize our method
will require computing the minimal denominator $q_{min}$ for
resonance relations, e.g., finding a minimal $(p,q) \in \bZ^{d+1}$ such that $| q \omega - p|$ is small.
One possible approach is to use generalized Farey path methods \cite{Kim86} that may provide a
version of \Lem{SmallestDenom} for this case.

\appendix
\appendixpage

\section{Farey Paths and the Smallest Denominator}\label{app:SmallDenom}
The Farey path \Eq{FareyPath} for any number $x$ can be computed by the simple method given in \Alg{FareyPath}.
In a practical calculation, a stopping criterion based on precision must be included.
This gives, for example
\begin{align*}
	\tfrac{17}{6} &= RRLRRRR = [2;1,4,1], \\
	\tfrac{7}{10} &= LRRLL = [0;1,2,2,1], \\
	 e &=  RRLR^2LRL^4RLR^6LRL^8\ldots = [2;1,2,1,1,2,1,1,4,1,1,6,1,1,8,\ldots],\\
	\pi &= R^3L^7R^{15}LR^{292}LRLR^2LR^3L\ldots = [3;7,15,1,292,1, 1, 1,2,1,3,1,\ldots] .
\end{align*}
Note that each element of the continued fraction records the number of repeated Farey symbols. The value of $a_0$ is nonzero if the Farey path begins with $R$, otherwise $a_0 = 0$, and $a_1$ counts the number of leading $L$'s in the path. For the rational case there is an additional last element, which is fixed to be $1$.

\begin{algorithm}  
\caption{Compute the Farey path for $x \in \bR^+$ assuming exact arithmetic}\label{alg:FareyPath}
	\begin{algorithmic}
	\Procedure{FareyPath}{x}
	\State $i \leftarrow 1$	
	\While{$x \neq 1$}
	\If{$x < 1$}
		\State  $s_i =  L$
		\State  $x \leftarrow \frac{x}{1-x}$
	\Else
		\State $s_i = R$
		\State $x \leftarrow x - 1$
	\EndIf
	\State  $ i \leftarrow i+1$
	\EndWhile 
	\EndProcedure
	\end{algorithmic}
\end{algorithm}

The Stern-Brocot tree gives a method for finding the rational with the smallest denominator $q_{min}(I)$ \Eq{qmin} in an interval $I$. Here we prove \Lem{SmallestDenom} to show that $q_{min}$ is the denominator of the first rational on the tree that falls in $I$:
\begin{proof}[Proof of \Lem{SmallestDenom}.]
Suppose that for all levels up to $\ell$ on the Stern-Brocot tree no Farey rational is in $I$. Since the Farey intervals partition $(0,\infty)$, there must be a Farey interval $J= (\frac{p_l}{q_l},\frac{p_r}{q_r}) \supset I$ for neighbors $\frac{p_l}{q_l}$ and $\frac{p_r}{q_r}$. Note that every number in $J$ and thus every number in $I$ must then be a descendent of these parents. Denote the mediant \Eq{Mediant} by $p_m/q_m$. Without loss of generality, we can assume that $p_m/q_m \in I$.
Every rational in the level $\ell+1$ daughter interval $(\frac{p_l}{q_l},\frac{p_m}{q_m})$ is a descendent of $\frac{p_m}{q_m}$ and since all of these are formed by the mediant operation all of these denominators are larger than $q_m$. The same is true for the upper interval
$(\frac{p_m}{q_m},\frac{p_r}{q_r})$. Since 
$I \subset (\frac{p_l}{q_l},\frac{p_m}{q_m}) \cup \{\frac{p_m}{q_m}\} \cup (\frac{p_m}{q_m},\frac{p_r}{q_r})$,
all remaining rationals in $I$ have denominator greater than $q_m$. Consequently $q_{min}(I) = q_m$, and, moreover, the rational with minimal denominator is unique.
\end{proof}

This result is encapsulated in \Alg{SmallDenom} to give a computation of the smallest denominator rational in $I_\delta(x)$ \Eq{Idelta}. For example, this algorithm gives
\begin{align*}
	q_{min}(I_{10^{-8}}(\pi)) & = 32085, \quad \frac{p_m}{q_m} = [3;7,15,1,283] ,\\
    q_{min}((I_{10^{-10}}(e))& =154257, \quad \frac{p_m}{q_m} = [2;1,2,1,1,2,1,1,4,1,1,6,1,1,8,1,1,8] .
\end{align*}
Neither of these are convergents of the continued fraction expansions.
\Alg{SmallDenom} ignores issues of finite precision arithmetic, 
and is not efficient if the Farey path has a long string of repeated symbols. An algorithm that does not have this deficit is given in \cite{Citterio16}.

\begin{algorithm}  
\caption{Find the smallest rational in the interval $I_\delta(x)$}\label{alg:SmallDenom}
	\begin{algorithmic}
	\Procedure{SmallDenom}{$x, \delta$}
	\State $(n,d) = (p_{l},q_{l}) = (0,1)$
	\State $(p_{r},q_{r}) = (1,0)$
	\While{$|x -\tfrac{n}{d}| \ge \delta$}
	\State $(n,d) = (p_l+p_r,q_l+q_r)$  \Comment{Find the mediant}
	\If{$x < n/d$}
		\State $(p_r,q_r) = (n,d)$ \Comment{$x \in (\tfrac{p_l}{q_l}, \tfrac{n}{d})$}
	\Else
		\State $(p_l,q_l) = (n,d)$ \Comment{$x \in [\tfrac{n}{d}, \tfrac{p_r}{q_r})$}
			\EndIf
	\EndWhile 
	 \State \textbf{return} $(n,d)$ \Comment{The smallest rational is $\tfrac{n}{d}$}
	\EndProcedure
	\end{algorithmic}
\end{algorithm}

We can obtain some additional understanding of the smallest denominator for the specific case when the bounds of the interval $I$ are arbitrary rationals \cite{Sivignon16}, 
\beq{rationalInterval}
	I = (\tfrac{p_l}{q_l}, \tfrac{p_r}{q_r}).
\eeq
To find the smallest denominator rational we expand each of the boundary points in their Farey paths:
\[
	a = \frac{p_l}{q_l} = a_0a_1a_2\ldots a_m, \quad   b = \frac{p_r}{q_r} = b_0b_1b_2\ldots b_n,
\]
with $a_i, b_i \in \{L,R\}$.
Then, as shown by \cite[Thm. 1]{Sivignon16}, there are three cases:
\begin{enumerate}
	\item When the boundary points of \Eq{rationalInterval} are Farey neighbors, the smallest rational in $I$ is the mediant, so $q_{min} = q_l + q_r$.
	\item If one Farey path is a subsequence of the other but they are not neighbors, then the smallest rational
	is a daughter of the shorter path and an ancestor of the longer. For example, if
	$a = b_1b_2\ldots b_n a_{n+1}a_{n+2} \ldots a_m = ba_{n+1}\ldots a_m$, then the smallest rational has the path 
	\[
		\frac{p}{q} =  b a_{n+1}\ldots{a_k} ,
	\]
	for some $k < m$.  This is the appropriate daughter of $b$ and ancestor of $a$. 
	Note that when $a < b$, then it must be the case that $a_{n+1} = L$. If, for example $a = bLRL\ldots$,
	then $bL < a < bLR < b$, so then we set $k = n+2$, and obtain $\tfrac{p}{q} = bLR$.
	\item If neither path is a subsequence of the other, then the smallest rational 
	is the unique rational that is a common ancestor of
	both on the tree: the longest Farey path for which
	they agree. For example, if $a_i = b_i$ for $i = 0, \ldots k < \min(m,n)$, and $a_{k+1} \neq b_{k+1}$  then $\frac{p}{q} = a_0a_1,\ldots, a_k$ is the smallest rational in $I$.
\end{enumerate}

Finally, for an interval bounded by irrationals we can prove the following lemma.

\begin{lem}[Smallest Rational in an Irrational Interval]
If $I = (a,b)$, $0<a<b$, $ a,b \in \bR \setminus \bQ$, then $q_{min}(I)$
is the denominator of the common Farey ancestor of $a$ and $b$, if there is one; otherwise $q_{min}(I) = 1$.
\end{lem}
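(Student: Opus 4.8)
The plan is to work entirely with the infinite Farey paths \Eq{FareyPath} of the two endpoints. Write $\alpha=\alpha_1\alpha_2\cdots$ and $\beta=\beta_1\beta_2\cdots$ for the paths of $a$ and $b$; these are infinite because $a,b\notin\bQ$, and since $a\neq b$ their longest common prefix $\sigma=\alpha_1\cdots\alpha_k$ is a finite (possibly empty) word. The two alternatives in the statement correspond exactly to $\sigma$ being empty or not. If $\sigma$ is empty, then $\alpha_1\neq\beta_1$, and since $a<b$ the point branching left at the root $\tfrac11$ is the smaller one, so $\alpha_1=L$, $\beta_1=R$; equivalently $a<1<b$, whence $1\in I$ and $q_{min}(I)=1$. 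In this case the only common ``ancestor'' is the root $\tfrac11$, which either counts as no ancestor or has denominator $1$ anyway, so the claim holds.

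For $\sigma$ nonempty, let $r=p/q=\mathrm{val}(\sigma)$ be the rational sitting at the node $\sigma$ of the Stern-Brocot tree; this is precisely the common Farey ancestor of $a$ and $b$. The key geometric fact I would establish is: a point $x\in\bR^+$ has $\sigma$ as a prefix of its path iff $x$ lies in the open interval $(L_\sigma,R_\sigma)$ between the nearest left and right ancestors of the node $\sigma$ (these are Farey neighbors \Eq{Neighbors} whose mediant \Eq{Mediant} is $r$), and within that interval $r$ is the \emph{only} rational whose Farey path is no longer than $\sigma$. This follows from the Stern-Brocot construction by an easy induction along the word $\sigma$, and is parallel to the partition argument used in the proof of \Lem{SmallestDenom} in \App{SmallDenom}. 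Granting it: since $\sigma$ is a prefix of both $\alpha$ and $\beta$ we get $L_\sigma<a<R_\sigma$ and $L_\sigma<b<R_\sigma$, hence $I=(a,b)\subseteq(L_\sigma,R_\sigma)$; and because $\alpha,\beta$ first disagree at position $k+1$ while $a<b$, necessarily $\alpha_{k+1}=L$ and $\beta_{k+1}=R$, so $a$ lies in the left branch below $r$ and $b$ in the right branch, giving $a<r<b$ and $r\in I$.

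To finish I would invoke \Lem{SmallestDenom}: $q_{min}(I)$ \Eq{qmin} is the denominator of the first rational on the Stern-Brocot tree lying in $I$. No rational with Farey path shorter than $\sigma$ lies in $I$: such a rational $\mathrm{val}(\tau)$ with $|\tau|<k$ either has $\tau$ not a prefix of $\sigma$, so $\mathrm{val}(\tau)\notin(L_\sigma,R_\sigma)\supseteq I$, or $\tau$ a proper prefix of $\sigma$, in which case $\alpha$ and $\beta$ still agree at position $|\tau|+1\le k$, so $a$ and $b$ lie in the same branch below $\mathrm{val}(\tau)$ and $\mathrm{val}(\tau)\notin(a,b)$; the endpoints $L_\sigma,R_\sigma$ are excluded because $L_\sigma<a$ and $b<R_\sigma$. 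Among rationals whose path has the same length as $\sigma$ the only one in $(L_\sigma,R_\sigma)$ is $r$ itself, and $r\in I$. Hence $r$ is the first tree-rational in $I$ and $q_{min}(I)=q$, the denominator of the common Farey ancestor.

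The main obstacle is the single ``key geometric fact'': making rigorous that the set of points with a given path-prefix is exactly the open interval between the two flanking ancestors, and that the node's own rational is the shallowest one there. Everything else is prefix bookkeeping. One cosmetic point to dispatch along the way is the convention for ``common ancestor'' when $r$ is an integer (path $\sigma=R^n$): then $I$ may contain several integers, but all have denominator $1=q$, so the statement is unaffected.
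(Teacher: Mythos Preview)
Your argument is correct, and it takes a somewhat different route from the paper's. The paper sandwiches $I$ inside a rational-endpoint interval $I_{\mathrm{outer}}=(a_L,b_R)$ obtained from finite \emph{truncations} of the infinite Farey paths of $a$ and $b$ (a left truncation of $a$ and a right truncation of $b$, both chosen longer than the common prefix). It then invokes the cited result of Sivignon for rational-endpoint intervals (item~(3) of the list in \App{SmallDenom}) to get $q_{min}(I_{\mathrm{outer}})=q_\ell$, and finishes by the squeeze $q_{min}(I_{\mathrm{outer}})\le q_{min}(I)\le q_\ell$ since $I\subset I_{\mathrm{outer}}$ and $\tfrac{p_\ell}{q_\ell}\in I$.

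Your approach bypasses Sivignon's theorem entirely: you work directly with \Lem{SmallestDenom} and the tree structure to show that $r=\tfrac{p}{q}$ is the shallowest Stern--Brocot rational landing in $I$. This is more self-contained, at the cost of having to prove your ``key geometric fact'' (that the set of reals whose Farey path begins with $\sigma$ is exactly the open interval between the two flanking ancestors $L_\sigma,R_\sigma$); the paper effectively outsources the analogous step to \cite{Sivignon16}. One small redundancy: once you have the key fact, your case split in the exclusion argument is unnecessary---any rational with path $\tau$ of length $<|\sigma|$ automatically fails to have $\sigma$ as a prefix, so it lies outside $(L_\sigma,R_\sigma)\supseteq I$; the separate ``proper prefix of $\sigma$'' branch argument is correct but already covered.
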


\begin{proof}
Denote the infinite Farey paths of the irrationals by $a = a_1a_2\ldots$ and $b = b_1b_2\ldots$, where $a_i, b_i \in \{L,R\}$, 
and let $\ell \in \bN$ be chosen so that the common ancestor of $a$ and $b$ is
\[
	\frac{p_\ell}{q_\ell} = a_1a_2 \ldots a_\ell = b_1b_2 \ldots b_\ell, 
	   \quad a_{\ell+1} \neq b_{\ell+1} .
\]
If $\ell$ does not exist, then since $a < b$, $a_1 = L$ and $b_1 = R$,
which means that $\frac{1}{1} \in I$, so that $q_{min} = 1$.

Now suppose that there is a common ancestor of length $\ell \ge 1$. 
Then since $a < b$, we must have $a_{\ell+1} = L$ and $b_{\ell+1} = R$
and $a < \tfrac{p_\ell}{q_\ell} < b$.
Denote a ``left truncation" of a path as a rational $a_L = a_1a_2\ldots a_j < a$ and a ``right 
truncation" as a rational $a_R = a_1a_2\ldots a_k > a$, see \Fig{FareyTruncation}. For example if 
$a_{j+1} = R$, and $a_{k+1} = L$  then we know that $a_1a_2\ldots a_j < a  < a_1a_2\ldots a_k$. Note 
that such truncations always exist for any irrational and any choice of minimal length since the 
infinite paths with tails $\ldots L^\infty$ and $\ldots R^\infty$ are rationals. Now, by item (3) above 
\cite[Thm. 1] {Sivignon16}, for the interval $I_{outer} = (a_L, b_R)$, the smallest denominator is that 
of the common Farey ancestor of $a_L$ and $b_R$: $q_{min}(I_{outer}) = q_\ell$.  Thus, whenever these 
rational truncations are both longer than $\ell$, then $I_{outer}$ contains the common Farey ancestor $
\tfrac{p_\ell}{q_\ell}$ and this has the smallest denominator.
Note that since $a_L < a$ and $b_R > b$, then $I \subset I_{outer}$.
Thus $q_{min}(I)$ is no less than $q_\ell$. Moreover since $\tfrac{p_\ell}{q_\ell} \in I$,
then $q_{min}(I)$ is no more than $q_\ell$. Thus $q_{min}(I) = q_\ell$.
\end{proof}

\InsertFig{FareyTruncation}{An interval $I=(a,b)$ bounded by a pair of irrationals with the outer interval 
$(a_L,b_R)$, and their common Farey Ancestor $\tfrac{p_\ell}{q_\ell}$}{FareyTruncation}{3in}

\bibliographystyle{alpha}
\bibliography{RotVector}

\end{document}